\documentclass[11pt]{article}

\usepackage[a4paper,margin=1in]{geometry}
\usepackage[T1]{fontenc}
\usepackage{microtype}

\usepackage{amsmath,amsfonts,amsthm,mathtools}

\usepackage{newpxtext,newpxmath, comment}

\usepackage{enumitem}
\usepackage[most]{tcolorbox}
\usepackage{xcolor}
\usepackage{hyperref}

\hypersetup{
  colorlinks=true,
  linkcolor=black,
  citecolor=black,
  urlcolor=blue
}

\allowdisplaybreaks[2]
\numberwithin{equation}{section}

\setlist[itemize]{leftmargin=2.15em,itemsep=0.35em,topsep=0.45em}
\setlist[enumerate]{leftmargin=2.5em,itemsep=0.4em,topsep=0.45em}

\newtheorem{theorem}{Theorem}[section]
\newtheorem{lemma}[theorem]{Lemma}
\newtheorem{claim}[theorem]{Claim}
\theoremstyle{definition}
\newtheorem{definition}[theorem]{Definition}
\theoremstyle{remark}
\newtheorem{remark}[theorem]{Remark}

\tcolorboxenvironment{theorem}{
  enhanced,
  breakable,
  colback=black!2!white,
  colframe=black!75!white,
  boxrule=0.6pt,
  arc=1.5pt,
  left=8pt,right=8pt,top=8pt,bottom=8pt,
  before skip=10pt,after skip=10pt
}

\tcolorboxenvironment{lemma}{
  enhanced,
  breakable,
  colback=black!2!white,
  colframe=black!75!white,
  boxrule=0.6pt,
  arc=1.5pt,
  left=8pt,right=8pt,top=8pt,bottom=8pt,
  before skip=10pt,after skip=10pt
}

\tcolorboxenvironment{claim}{
  enhanced,
  breakable,
  colback=black!2!white,
  colframe=black!75!white,
  boxrule=0.6pt,
  arc=1.5pt,
  left=8pt,right=8pt,top=8pt,bottom=8pt,
  before skip=10pt,after skip=10pt
}

\tcolorboxenvironment{definition}{
  enhanced,
  breakable,
  colback=black!2!white,
  colframe=black!75!white,
  boxrule=0.6pt,
  arc=1.5pt,
  left=8pt,right=8pt,top=8pt,bottom=8pt,
  before skip=10pt,after skip=10pt
}

\tcolorboxenvironment{remark}{
  enhanced,
  breakable,
  colback=black!2!white,
  colframe=red,
  boxrule=0.6pt,
  arc=1.5pt,
  left=8pt,right=8pt,top=8pt,bottom=8pt,
  before skip=10pt,after skip=10pt
}

\tcolorboxenvironment{proof}{
  enhanced,
  breakable,
  blanker,
  borderline west={1.2pt}{0pt}{black!60},
  left=10pt,right=2pt,top=3pt,bottom=3pt,
  before skip=8pt,after skip=8pt
}

\DeclareMathOperator{\Pal}{Pal}
\DeclareMathOperator{\CoeffX}{Coeff_X}
\DeclareMathOperator{\Span}{span}
\DeclareMathOperator{\mindeg}{mindeg}

\newcommand{\F}{\mathbb{F}}
\newcommand{\coeff}[2]{\operatorname{coeff}_{#1}\!\left(#2\right)}
\newcommand{\proj}{\pi_d}
\newcommand{\B}{\mathcal{B}}
\newcommand{\M}{\mathcal{M}}
\newcommand{\Dset}{\mathcal{D}}

\newcommand{\Palnd}{\Pal_{n,d}(X,Y)}

\title{
  A Quadratic Lower Bound for
  Noncommutative Circuits
}
\author{Pratik Shastri$^{*,\dagger}$\\[0.5ex]
\small $^*$The Institute of Mathematical Sciences, Chennai, India\\
\small $^\dagger$Homi Bhabha National Institute, Training School Complex,\\
\small Anushakti Nagar, Mumbai, India}

\date{}

\begin{document}

\maketitle

\begin{abstract}
We prove that every fan-in $2$ noncommutative arithmetic circuit computing the palindrome polynomial $\Palnd$ has size $\Omega(nd)$. In particular, when $d=n$ we obtain an $\Omega(n^2)$ lower bound. The proof builds on and refines a previous work of the author \cite{shastri-itcs2026}. Key ideas in the proof were generated by Gemini 3.1 Pro. 
\end{abstract}

\section{Introduction}

An arithmetic circuit for a noncommutative polynomial $f$ is a directed acyclic graph (DAG) which represents a way to compute $f$, starting from variables and field constants (these label the leaves) and using addition and (noncommutative) multiplication as building blocks (these label the internal vertices). Its size is measured in terms of the number of vertices in the underlying DAG. The fan-in of a circuit is defined as the maximum in-degree of any of its vertices. Proving lower bounds on the size of noncommutative arithmetic circuits computing explicit\footnote{We say a polynomial is explicit if there exists a polynomial time algorithm which can compute the coefficient of a monomial provided as an input. Since we will only deal with polynomials that have $0-1$ coefficients, this definition suffices for us.} polynomials is an important
problem in arithmetic circuit complexity. For explicit polynomials on $n$ variables with degree $\Theta(n)$,
the strongest known lower bounds for general circuits are of the form $\Omega(n\log n)$ \cite{Stra73, BS83}. While the eventual goal is to be able to prove superpolynomial or even exponential (in $n$) lower bounds, this seems out of reach at the present moment and the best known lower bound has remained stagnant at $\Omega(n\log n)$ for decades. The starting point for much of the area is the work of Nisan \cite{nisan-stoc1991}, who proved
exponential lower bounds on the size of noncommutative formulas computing explicit
polynomials. Formulas are circuits whose graph is a tree. His result also established an exponential separation between the power of formulas and circuits, but it did not yield similarly strong bounds for general circuits \cite{nisan-stoc1991}.
Subsequent work has explored noncommutative circuits from various angles. For instance, Hrubeš, Wigderson and Yehudayoff \cite{hrubes-wigderson-yehudayoff-jams2011} connected noncommutative circuit lower bounds to the classical sum-of-squares problem, while Carmosino, Impagliazzo, Lovett and Mihajlin \cite{carmosino-impagliazzo-lovett-mihajlin-ccc2018} showed a hardness-amplification phenomenon. They show that mildly superlinear lower bounds for noncommutative circuits computing constant degree polynomials would already imply exponential lower bounds.

Restricted variants of noncommutative circuits have also been explored in the literature. A circuit is said to be homogeneous if it only ever computes homogeneous polynomials. Recently, Chatterjee and Hrubeš \cite{chatterjee-hrubes-ccc2023} showed that lower bounds better than $\Omega(n\log n)$ can be proved under
the homogeneity restriction. In particular, they constructed an explicit polynomial $f$ on $n$
variables of degree $\Theta(n)$ such that any homogeneous noncommutative circuit with fan-in $2$ computing
$f$ requires size $\Omega(n^2)$. This gives a substantial improvement over the
$\Omega(n\log n)$ barrier, but only for homogeneous circuits. While a noncommutative
circuit can always be homogenized, this incurs a multiplicative $d^2$ blowup in size, where
$d$ is the degree of the output polynomial, and so homogeneous circuit lower bounds do not
immediately translate to lower bounds for general circuits. In an attempt to go beyond homogeneity, the author recently proved \cite{shastri-itcs2026} lower bounds better than $\Omega(n\log n)$ for noncommutative circuits with low \emph{syntactic degree}. This is a class of circuits that generalizes homogeneous circuits. To explain in more detail, we set up some notation.

Let $\F$ be a field, let $X=\{x_1,\dots,x_n\}$ and $Y=\{y_1,\dots,y_n\}$ be disjoint sets of noncommuting variables, and let $\Pal_{n,d}(X,Y)\in \F\langle X,Y\rangle$
denote the palindrome polynomial
$$\Pal_{n,d}(X,Y)=\sum_{(i_1,\dots,i_d)\in[n]^d}\left(\prod_{j=1}^d x_{i_j}\right)\left(\prod_{j=1}^d y_{i_{d+1-j}}\right).$$
A version of this polynomial already appears in Nisan's work as a witness separating
formulas from circuits. On the one hand, $\Palnd$ admits a simple homogeneous circuit of size $O(nd)$, obtained from the recursion $\Pal_{n,d}=\sum_{i=1}^n x_i\,\Pal_{n,d-1}\,y_i$.
On the other hand, Nisan's formula lower bound implies that any formula computing
$\Palnd$ requires size $n^{\Omega(d)}$.

As alluded to before, in a previous work, the author proved a lower bound for circuits computing $\Pal_{n,n}$ under certain restrictions on the \emph{syntactic degree} of the circuit.  The syntactic degree of a gate is defined recursively as follows: a leaf labeled by a variable has syntactic degree 1, and a leaf labeled by a field constant has syntactic degree 0. For a sum gate, the syntactic degree is the maximum of the syntactic degrees of its children, while for a product gate, it is the sum of the syntactic degrees of its children. The syntactic degree of the circuit is the maximum syntactic degree of any gate. It is well known that if a circuit is homogeneous and computes a polynomial of degree $d$, then, without loss of generality, its syntactic degree may be assumed to be exactly $d$. In \cite{shastri-itcs2026}, the author proved the following:

\begin{theorem}\label{thm:low-syn-deg}
    Every fan-in $2$ noncommutative arithmetic circuit computing $\Pal_{n,n}(X,Y)$ with syntactic degree $O(n)$ has size $\Omega(n^{1+c})$ for some constant $c>0$.
\end{theorem}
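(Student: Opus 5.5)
The plan is to reduce to the homogeneous setting of Chatterjee and Hrubeš \cite{chatterjee-hrubes-ccc2023}, paying only a moderate price that the syntactic degree bound $D=O(n)$ controls. Let $C$ be a fan-in $2$ circuit of size $s$ computing $\Pal_{n,n}(X,Y)$ with syntactic degree at most $D\le kn$. Every gate $u$ computes a polynomial $f_u$ of degree at most $D$. Only the degree-$2n$ component of the output matters.

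Full homogenization would split each $f_u$ into its $D+1$ homogeneous components and cost a factor $D^2$, which destroys any quadratic-type bound. Instead I will use \emph{coarse} homogenization.
\begin{enumerate}
\item Fix a block length $b=n^{1-c}$ and group degrees into $O(D/b)=O(n^{c})$ windows.
\item For each gate $u$ and each window $W$, introduce a gate computing $f_u^{W}$, the sum of the homogeneous components of $f_u$ with degree in $W$.
\item At a product gate, the window of the product depends only on the windows of the two factors, up to one carry. So each product gate costs $O(n^{2c})$ new gates.
\end{enumerate}
The result is a ``$b$-approximately homogeneous'' circuit of size $O(s\,n^{2c})$. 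Every gate computes a polynomial whose monomials have degrees within a window of width $O(b)$, and the output's degree-$2n$ part is recovered exactly.

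The second step is to adapt the Chatterjee--Hrubeš argument to approximately homogeneous circuits. Their argument gives $\Omega(n\cdot d)$ for homogeneous circuits computing $\Pal_{n,d}$. I will run it with a measure that tolerates an additive slack of $b$ in degrees.

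For a position $t\in[n]$, consider the coefficient matrix of $\Pal_{n,n}$ that splits monomials into a prefix of degree $t$ and the remaining suffix. This matrix has rank $n^{t}$, and the rows indexed by the length-$t$ prefix in $X$ are matched to the suffix's reversed $Y$-tail.

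The homogeneous proof shows that each position $t$ must be ``witnessed'' by a product gate whose left factor has degree exactly $t$ (or nearby), and that such a gate must have fan-in from $\Omega(n)$ distinct variables. Summing over positions then gives $\Omega(n^2)$. With slack $b$, one gate can only witness positions in a window of width $O(b)$. So I expect to obtain $\Omega(n\cdot n/b)\cdot$(per-window cost) $=\Omega(n^{1+c})$ gates in the approximately homogeneous circuit, after accounting for variables reading.

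Combining the two steps gives $s\,n^{2c}\ge \Omega(n^{2-c'})$ for suitable exponents, hence $s=\Omega(n^{1+c''})$ once $c$ is chosen small.

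The main obstacle is the second step. Within a window, cancellations among components of different degrees could hide the rank structure that the homogeneous argument exploits. To handle this, I will first try to restrict each window-gate to its exact degree-$t$ slice only at the $O(1)$ gates that the argument examines. Exact homogenization applied at polynomially few chosen gates is cheap, and I will try to show that the relevant witnesses survive this restriction.
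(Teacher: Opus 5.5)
There is a genuine gap, and in fact two independent fatal problems.

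\textbf{Step 1 does not work.} The coarse homogenization cannot be carried out as you describe. For a product gate $u=v\times w$, the term $f_v^{W_i}f_w^{W_j}$ has degrees in $[(i+j)b,(i+j+2)b-2]$, so it straddles $W_{i+j}$ and $W_{i+j+1}$. Which window a monomial falls into depends on the exact degrees of its two factors inside their windows. Consequently $f_u^{W_k}$ is in general not a combination of the products $f_v^{W_i}f_w^{W_j}$. For example, take $b=2$, $f_v=x_1$ and $f_w=1+y_1$. Then $f_u^{W_0}=x_1$, but the only available product is $f_v^{W_0}f_w^{W_0}=x_1+x_1y_1$. Resolving the ``carry'' requires the within-window components, i.e.\ tracking degree modulo $b$ as well. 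That is full homogenization again, with its $\Theta(D^2)$ overhead. Suppose instead you leave carries unresolved and let gate $(u,k)$ compute $\sum_{i+j=k}F_{v,i}F_{w,j}$. Then the degree spread adds under multiplication: a product of $m$ variables carries window label $0$ but has degree $m$. The spread can therefore grow to the syntactic degree, and the ``$b$-approximately homogeneous'' property is lost.

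\textbf{The exponents do not close, and Step 2 is unproven.} Grant both steps. Step 2, as you state it, gives $\Omega(n\cdot n/b)=\Omega(n^{1+c})$ gates in a circuit of size $O(s\,n^{2c})$. This yields only $s=\Omega(n^{1-c})$. More generally, an overhead of $(D/b)^2$ against a bound of $n^2/b$ gives at best $s=\Omega(bn^2/D^2)$, which is $O(b)=O(n)$ when $D=\Theta(n)$. So the plan can never give a superlinear bound. The $n^{2-c'}$ in your final line is not what Step 2 provides. Step 2 itself is only an expectation. Your count needs many witnessing gates, not $O(1)$. Restricting even one gate to an exact degree slice requires the homogeneous components of its entire subcircuit, so within-window cancellation is not handled. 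You would also have to check that the Chatterjee--Hrube\v{s} argument applies to $\Pal_{n,d}$ at all.

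\textbf{How the paper proceeds.} It avoids homogenization entirely. $\Span\CoeffX(\Pal_{n,n})$ has dimension $n^n$. It is spanned by products $p_1\cdots p_r$ of left multipliers $\widehat a_X$ of product gates, using at most one factor from each syntactic-degree class. With syntactic degree $O(n)$ there are $d'\le Cn$ classes. Hence there are at most $\prod_k(1+s_k)\le(1+s/d')^{d'}$ spanning elements. Then $n^n\le(1+s/(Cn))^{Cn}$ forces $s=\Omega(n^{1+1/C})$. The present paper sharpens this by projecting to degree $n$. Since every nonzero multiplier has $\mindeg\ge 1$, only products of length at most $n$ survive. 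This gives $s\ge(n-1)n/e$ with no syntactic-degree assumption.
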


The proof of Theorem \ref{thm:low-syn-deg} proceeds by constructing a "small" spanning set for the coefficient-space of $\Pal_{n,n}$, starting from a small circuit. The current paper uses similar techniques and improves upon \cite{shastri-itcs2026} to prove a tight lower bound for \emph{all} circuits computing $\Palnd$, without any restriction on the syntactic degree.

\begin{theorem}\label{thm:intro-main}
Every fan-in $2$ noncommutative arithmetic circuit computing $\Palnd$ has size
$\Omega(nd)$.
\end{theorem}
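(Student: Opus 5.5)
Our plan is to turn a size-$s$ circuit $C$ for $\Palnd$ into a spanning set, of size $O(s)$, for a subspace of dimension about $nd$ extracted from $\Palnd$, following the coefficient-space strategy of \cite{shastri-itcs2026}. For the polynomial itself, consider the \emph{partial-derivative/coefficient space} obtained by fixing an $X$-prefix: for each $0\le k\le d$ and each word $w\in[n]^k$, let $\coeff{w}{\Palnd}$ denote the polynomial obtained by keeping the monomials that start with $x_{w_1}\cdots x_{w_k}$ followed by a $y$-free or $x$-free continuation, and stripping off that prefix. The relevant quantity is not the dimension of these spaces (which is exponential) but a \emph{degree-graded} count. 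For every $k$ and every $i\in[n]$, the polynomial $\Palnd$ "needs" a distinct place where the variable $y_i$ is multiplied at depth $k$. Concretely, we aim to show that the family of $nd$ polynomials $\{\,\Pal_{n,k}\,y_i : 0\le k<d,\ i\in[n]\,\}$, or rather suitable projections of them onto their homogeneous degree-$(2k+1)$ component, must all be realized by distinct gates, up to linear combinations of $O(1)$ gates each.

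The key steps are as follows.
\begin{enumerate}
\item \textbf{Normalization.} Since $\Palnd$ is homogeneous of degree $2d$, replace each gate $g$ by its homogeneous components $g^{(j)}$ for $j\le 2d$. We do not pay for this blowup; we only use it as bookkeeping. For a product gate $g=h_1h_2$ we have $g^{(j)}=\sum_a h_1^{(a)}h_2^{(j-a)}$.
\item \textbf{Derivation by suffix.} Write $\Palnd=\sum_{g\text{ product}} \sum L_g\cdot(\text{stuff})$ via the standard noncommutative "last multiplication" decomposition: every monomial's suffix $y_{i_k}\cdots y_{i_1}$ is produced at some product gate $h_1\cdot h_2$ where $h_2$'s contribution begins with $y$. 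We consider right-quotients by $Y$-words $v$ of length $k$. Each quotient $\Palnd / v$ (restricted to monomials whose $X$-part is pinned) lies in the span of $\{h^{(j)}/v'\}$ over gates $h$ and suffixes $v'$ of $v$.
\item \textbf{Counting.} The main claim is that for each product gate $g=h_1h_2$, the set of pairs $(k,i)$ for which $g$ is "essential" (that is, $h_2$ contributes a piece beginning with $y_i$ that is matched with an $X$-prefix of length exactly $d-k$) has size $O(1)$ after a projection argument. We plan to prove this by a rank argument: the $nd$ polynomials $P_{k,i}:=x_{\ast}^{d-k}\text{-coefficient of }\Palnd$ relating $x$ position $d-k$ to $y_i$ form a linearly independent family (distinct monomial supports), so their span has dimension $nd$. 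Each gate contributes a subspace of bounded dimension to a spanning set for this span, giving $s=\Omega(nd)$.
\end{enumerate}

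The main obstacle is step 3: without syntactic degree bounds, a single gate can contribute at many degrees $j$ simultaneously, so the naive spanning set has size $s\cdot d$, not $O(s)$, and this is exactly where \cite{shastri-itcs2026} lost a factor. We would first try to charge each pair $(k,i)$ to the \emph{lowest-degree} homogeneous component where it appears. We would then argue, by exploiting that $P_{k,i}$ involves the single $Y$-variable $y_i$ at a fixed position, that a gate can be the lowest-degree witness for at most a constant number of pairs. Intuitively, a gate computing something whose minimum-degree part involves $y_i$ at position $k$ cannot also have minimum-degree part at a different position. If this fails, the fallback is an amortized potential argument over the topological order, summing $\mindeg$ contributions.
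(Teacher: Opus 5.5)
\textbf{There is a genuine gap.} Step~3 is the entire lower bound, and it is left as a heuristic that steps~1--2 do not support.

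First, the coefficient spaces you invoke in step~2 cannot obey any ``$O(1)$ dimensions per gate'' bound. We have $\dim_{\F}\Span\CoeffX(\Palnd)=n^d$, yet $\Palnd$ has circuits of size $O(nd)$. The reason is that at a product gate $g=a\times b$, $\coeff{m}{\widehat g_{XY}}$ contains the term $\widehat a_X\cdot\coeff{m}{\widehat b_{XY}}$. So the entire span at $b$ reappears under each left multiplier with which $b$ is reused, and dimensions compound along paths of the DAG. To get an additive bound you would have to pass to a specific $nd$-dimensional subspace or quotient (your $P_{k,i}$, which are never precisely defined) and prove that the circuit's contributions \emph{there} are additive. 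That proof is what is missing. Step~2 as written is also vacuous, since one may take $h$ to be the output gate.

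Second, nothing forces a general non-homogeneous circuit to compute $\Pal_{n,k}\,y_i$, or any projection of it, at any gate. So ``realized by distinct gates'' has no basis.

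Third, the charging heuristic is too weak even if true. A gate's lowest-degree part can involve all $n$ indices $i$ at a single position (e.g.\ a product gate $h_1\times h_2$ with $h_2=y_1+\cdots+y_n$). Bounding only the number of positions per gate therefore charges up to $n$ pairs to one gate, which yields only $s=\Omega(d)$. Moreover, low-degree parts of intermediate gates can be cancelled later, so being a lowest-degree ``witness'' is not forced by the output.

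\textbf{What the paper does instead.} It does not look for an $nd$-dimensional space at all. It works with the full $n^d$-dimensional $\Span\CoeffX(\Palnd)$ and controls the multiplicative growth directly. It spans $\CoeffX(\widehat g_{XY})$ by $\B_k=\B_{k-1}\cup\{p\,h: p\in\M_k,\ h\in\B_{k-1}\}$, where $\M_k$ consists of the left multipliers $\widehat a_X$ of product gates of syntactic degree $l_k$. Hence every spanning element is a product $p_1\cdots p_r$ using at most one multiplier from each syntactic-degree class.

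The non-homogeneity you flag in step~3 is handled not by homogenizing but by the fact that $\mindeg(\widehat a_X)\ge 1$. After applying $\proj$, only products with $r\le d$ survive. Writing $s_k$ for the number of product gates in class $k$, counting gives $n^d\le\sum_{r=0}^{d}e_r(s_1,\dots,s_{d'})\le\sum_{r=0}^{d}s^r/r!\le(1+es/d)^d$, hence $s\ge(n-1)d/e$. The factor $nd$ appears only after taking a $d$-th root of this exponential comparison.

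Your fallback of ``summing $\mindeg$ contributions'' points at the right ingredient. In the paper, however, it bounds the \emph{length} of products of multipliers rather than serving as a per-gate potential.
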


\begin{remark}
    For a discussion regarding the use of AI to find the proofs in this paper, we refer the reader to Section \ref{sec:AI}.
\end{remark}

\section{Comparison with concurrent work}

In concurrent work, Raz \cite{raz2026polynomial} proved an $\Omega(d\sqrt n)$ lower
bound for explicit $n$ variate degree $d$ polynomials, and also showed that such lower
bounds imply lower bounds for arithmetic circuits over free noncommutative rings.

\section{Proof idea and notation}

We now set up the notation used in the proof and describe the proof idea. We continue to use the coefficient space viewpoint from \cite{shastri-itcs2026}, which we describe below. These ideas stem originally from Nisan's work \cite{nisan-stoc1991}. A polynomial
$f\in \F\langle X,Y\rangle$ is said to be $X,Y$-separated if in each nonzero monomial of
$f$, every $X$-variable appears before every $Y$-variable. The palindrome polynomial
$\Palnd$ is $X,Y$-separated. Accordingly, if $f$ is $X,Y$-separated, we may write
$f=\sum_{m\in Y^*} c_m\,m$, where $c_m\in \F\langle X\rangle$.
We write $\coeff{m}{f}\coloneqq c_m$, and we define $\CoeffX(f)\coloneqq \{\coeff{m}{f} \mid m\in Y^*, c_m \nequiv 0\}\subseteq \F\langle X\rangle$.
For $f=\Palnd$, the set $\CoeffX(f)$ is exactly the set of all degree $d$
monomials in the variables $X$. Since distinct noncommutative monomials are linearly
independent, it follows that $\dim_{\F}\Span\bigl(\CoeffX(\Palnd)\bigr)=n^d$.

The rest of the proof is an upper bound on this same dimension in terms of the size of a
circuit computing $\Palnd$. Let $\Psi$ be a fan-in $2$ circuit of size $s$
computing $\Palnd$. The spanning-set construction from \cite{shastri-itcs2026}
associates to $\Psi$ a family of sets $\B_0\subseteq \B_1\subseteq \cdots \subseteq \B_{d'}\subseteq \F\langle X\rangle$
such that $\CoeffX(\Palnd)\subseteq \Span(\B_{d'})$. In the present proof, we continue to work with a similar construction. Furthermore, we observe that every element of $\B_{d'}$
admits a parameterization of the form $p_1p_2\cdots p_r$,
where the multipliers $p_i$ come from strictly decreasing syntactic degree classes.

In the previous paper, we bound the size of the spanning sets themselves. That is no longer
the right quantity here. For general circuits, the set $\B_{d'}$ can be too large. The
new step is to apply the linear projection $\proj:\F\langle X\rangle\to \F\langle X\rangle$ onto the homogeneous degree $d$ component and count only those elements of $\B_{d'}$
whose image under $\proj$ is nonzero. 

The key observation is that every nonzero multiplier $p_i$ has minimum degree at least $1$. It
follows that any nonzero parameterization of length $r$ has minimum degree at least $r$. Hence if
$r>d$, its degree $d$ projection is zero. So, after projection, only parameterizations of
length at most $d$ can survive.

At this point the problem becomes a counting problem. For each length $r\le d$, choosing a
valid parameterization amounts to choosing $r$ distinct syntactic degree classes together
with one multiplier from each chosen class. This gives an upper bound of $\sum_{r=0}^{d}s^r/r!$
on the number of elements of $\B_{d'}$ with nonzero degree $d$ projection. Since
$\CoeffX(\Palnd)$ lies in the span of these projections, we obtain $n^d\le\sum_{r=0}^{d}s^r/r!$.
An estimate now shows that this inequality is impossible when $s=o(nd)$. This
yields the required lower bound of $s=\Omega(nd)$.

We now set up some more notation used in the proof. For any nonzero polynomial $f\in \F\langle X\rangle$, define $\mindeg(f)$ to be the minimum
degree of a monomial appearing in $f$. The following lemma is immediate:

\begin{lemma}\label{lem:mindeg-product}
For all nonzero $f,g\in \F\langle X\rangle$, $\mindeg(fg)=\mindeg(f)+\mindeg(g)$.
\end{lemma}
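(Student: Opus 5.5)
The plan is to look only at the lowest-degree homogeneous parts of $f$ and $g$. Write $a=\mindeg(f)$ and $b=\mindeg(g)$. Decompose $f=f_a+f_{>a}$, where $f_a\neq 0$ is the homogeneous component of $f$ of degree $a$ and every monomial of $f_{>a}$ has degree greater than $a$. Decompose $g=g_b+g_{>b}$ in the same way.

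First, I expand the product:
$$fg=f_ag_b+f_ag_{>b}+f_{>a}g_b+f_{>a}g_{>b}.$$
A product of a monomial of degree $i$ with a monomial of degree $j$ is a single monomial of degree $i+j$. Hence every monomial in the last three terms has degree strictly greater than $a+b$. Also, every monomial in $f_ag_b$ has degree exactly $a+b$. So the homogeneous degree-$(a+b)$ component of $fg$ is $f_ag_b$, and $fg$ has no monomials of degree below $a+b$.

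It then suffices to show that $f_ag_b\neq 0$. This follows because $\F\langle X\rangle$ has no zero divisors. Concretely, I fix a monomial order on words: compare by degree first, then lexicographically with respect to some order on $X$. This order is multiplicative, meaning that $u<u'$ implies $uw<u'w$ and $wu<wu'$ for all words $w$.

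Let $u$ be the largest monomial of $f_a$ and $v$ the largest monomial of $g_b$. Take any other pair of monomials $u'$ of $f_a$ and $v'$ of $g_b$, with $(u',v')\neq(u,v)$. Then $u'v'<uv$, because $u'v'\le uv'\le uv$ and at least one of these inequalities is strict. Therefore the coefficient of $uv$ in $f_ag_b$ is the product of the leading coefficients of $f_a$ and $g_b$. Both are nonzero field elements, so this coefficient is nonzero.

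The only step that needs any care is this leading-term argument, which guarantees that the lowest-degree components do not cancel. Since noncommutative monomials multiply by concatenation, it is routine. Combining the two steps gives that $fg\neq 0$ and $\mindeg(fg)=a+b$.
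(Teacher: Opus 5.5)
Your proof is correct. The paper gives no argument at all and simply calls the lemma immediate; your reduction to the lowest-degree homogeneous components $f_a,g_b$, followed by a check that $f_ag_b\neq 0$, is the standard justification the paper leaves implicit.

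One simplification is available. Since $f_a$ and $g_b$ are homogeneous, every word of length $a+b$ factors uniquely as $uv$ with $|u|=a$ and $|v|=b$. So the coefficient of $uv$ in $f_ag_b$ is exactly the product of the coefficient of $u$ in $f_a$ and the coefficient of $v$ in $g_b$. No cancellation can occur, and the monomial order and leading-term argument are unnecessary.
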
 
We allow our circuits to have edges labeled by scalars from the field. For concreteness, we define our circuit model below.

\begin{definition}[Circuit model]
A fan-in $2$ noncommutative arithmetic circuit is a directed acyclic graph whose leaves
are labeled by field constants and variables and whose internal nodes (called gates) are labeled by $+$
or $\times$. Every internal node has in-degree $2$. The $\times$ gates all have designated left and right children. Each gate computes a noncommutative polynomial in the natural way. We allow edges labeled by field constants, these simply scale the polynomial carried by the edge.
The size of the circuit is the total number of vertices in its graph. The depth of a gate $g$ is the length of the longest path from a leaf to $g$.
\end{definition}

\section[Construction of spanning set]{Construction of spanning set from \cite{shastri-itcs2026}}

In this section, we recall and slightly modify the construction of spanning sets from
\cite{shastri-itcs2026}. Let $\Psi$ be a fan-in $2$ noncommutative arithmetic circuit of size $s$ computing
$\Palnd$.

For each gate $g$ of $\Psi$, let $\widehat g$ denote the polynomial computed at $g$.
Recall that its syntactic degree, denoted $d(g)$, is defined in the standard way:
constants have degree $0$, variables have degree $1$, sum gates take the maximum,
and product gates add.

Without loss of generality, we may assume that for every product gate $g=a\times b$ in the circuit,
both inputs have syntactic degree at least $1$, for if an input had syntactic degree $0$ (a scalar constant),
the multiplication could simply be absorbed into the scalar weight of the outgoing edges of $g$.

Consequently, for every product gate $g=a\times b$, both children $a$ and $b$
must have strictly positive syntactic degrees. Since $d(g) = d(a) + d(b)$, it follows that the syntactic degrees of $a$ and $b$ are strictly smaller than that of $g$.

Let $\Dset_\Psi=\{l_1<l_2<\cdots<l_{d'}\}$ be the set of distinct syntactic degrees appearing in $\Psi$.
For each $k\in[d']$, define $G_k\coloneqq \{g\mid\ d(g)=l_k\}$, $P_k\coloneqq \{g\in G_k\mid\ g\text{ is a product gate}\}$, and $s_k\coloneqq |P_k|$.
Since the sets $P_k$ are pairwise disjoint,
\begin{equation}\label{eq:sumsk}
\sum_{k=1}^{d'} s_k \le s.
\end{equation}

For each gate $g$, decompose $\widehat g$ as $\widehat g = \widehat g_1+\widehat g_X+\widehat g_Y+\widehat g_{XY}+\widehat g_{\mathrm{other}}$, where:
\begin{itemize}
    \item $\widehat g_1$ is the constant term of $\widehat g$
    \item $\widehat g_X$ is the sum of all non-constant monomials of $\widehat g$
    involving only variables from $X$
    \item $\widehat g_Y$ is the sum of all non-constant monomials of $\widehat g$
    involving only variables from $Y$
    \item $\widehat g_{XY}$ is the sum of all monomials of $\widehat g$ containing both
    $X$ and $Y$ such that every $X$-variable appears before every $Y$-variable
    \item $\widehat g_{\mathrm{other}}$ is the sum of all remaining monomials.
\end{itemize}

Now fix $k\in[d']$. For every product gate $g=a\times b \in P_k$, the left input $a$ has syntactic degree strictly smaller than $l_k$. Define
$\M_k \coloneqq \{\widehat a_X : \text{there exists a product gate } g=a\times b \in P_k\}$.
These are the \emph{left} multipliers used in the spanning-set's construction.

We define sets $\B_k\subseteq \F\langle X\rangle$ for $0\le k\le d'$ inductively by $\B_0=\{1\}$, and, for $k\ge 1$, by $\B_k=\B_{k-1}\cup\{p\,h \mid p\in \M_k,\ h\in \B_{k-1}\}$.

\begin{claim}\label{claim:spanning}
For every $k\in[d']$ and every gate $g\in G_k$, $\CoeffX\!\bigl(\widehat g_{XY}\bigr)\subseteq \Span(\B_k)$.
\end{claim}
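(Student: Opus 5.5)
We argue by strong induction on $k$ (equivalently, on syntactic degree). Two preliminary observations are used throughout. First, $\B_{k-1}\subseteq\B_k$ for all $k$, so $\Span(\B_j)\subseteq\Span(\B_k)$ whenever $j\le k$. Second, for a gate $g$ with $d(g)=l_k$, every child has syntactic degree $l_j$ with $j\le k$, and $j<k$ for both children of a product gate. Leaves need no work: a variable or constant has $\widehat g_{XY}=0$, so the claim is vacuous there.

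\textbf{Sum gates.} Let $g=\alpha a+\beta b$, where $\alpha,\beta$ are the edge scalars. The decomposition is linear, so $\widehat g_{XY}=\alpha\widehat a_{XY}+\beta\widehat b_{XY}$. Consequently, for every $m\in Y^*$ we have $\coeff{m}{\widehat g_{XY}}=\alpha\,\coeff{m}{\widehat a_{XY}}+\beta\,\coeff{m}{\widehat b_{XY}}$. Each term on the right is either zero or an element of $\CoeffX$ of a child. The children have degree $l_j\le l_k$, so by induction these lie in $\Span(\B_j)\subseteq\Span(\B_k)$.

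There is one subtlety: a sum gate may have a child in the same class $G_k$. To handle this, we run the induction over gates in topological order inside each class. Equivalently, we use strong induction on the pair $(k,\text{depth})$, which is well-founded.

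\textbf{Product gates.} Let $g=a\times b\in P_k$, with edge scalars absorbed into $\widehat a,\widehat b$; scaling does not affect spans. Expand $\widehat a\widehat b$ into the five-by-five pieces. A product of monomials is $X,Y$-separated with both an $X$ and a $Y$ variable only in the following cases:
\begin{itemize}
\item $\widehat a_1\cdot\widehat b_{XY}$;
\item $\widehat a_{XY}\cdot\widehat b_1$;
\item $\widehat a_{XY}\cdot\widehat b_Y$;
\item $\widehat a_X\cdot\widehat b_{XY}$;
\item $\widehat a_X\cdot\widehat b_Y$.
\end{itemize}
Every other combination yields a monomial with a $Y$ before an $X$, or one lacking $X$ or $Y$. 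We also need that $\widehat g_{\mathrm{other}}$ products never cancel into the $XY$ part. This holds because monomials are classified individually and multiplication of monomials is concatenation, so the type of a product monomial is determined by the types of its factors.

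We now compute the $Y^*$-coefficients of each surviving term.
\begin{itemize}
\item $\coeff{m}{\widehat a_1\widehat b_{XY}}$ is a scalar multiple of $\coeff{m}{\widehat b_{XY}}$, which lies in $\Span(\B_k)$ by induction.
\item $\coeff{m}{\widehat a_{XY}\widehat b_1}$ and $\coeff{m}{\widehat a_{XY}\widehat b_Y}$: writing $\widehat b_Y=\sum_{w}\beta_w w$, the coefficient of $m$ is $\sum_{m=m'w}\beta_w\,\coeff{m'}{\widehat a_{XY}}$ (including $w=1$ for $\widehat b_1$). This is a combination of elements of $\CoeffX(\widehat a_{XY})\subseteq\Span(\B_j)$ with $j<k$.
\item $\coeff{m}{\widehat a_X\widehat b_{XY}}=\widehat a_X\cdot\coeff{m}{\widehat b_{XY}}$. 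By induction the second factor lies in $\Span(\B_{j'})$ with $j'<k$, hence in $\Span(\B_{k-1})$. Since $\widehat a_X\in\M_k$, the product lies in $\Span\{p h: p\in\M_k,\ h\in\B_{k-1}\}\subseteq\Span(\B_k)$.
\item $\coeff{m}{\widehat a_X\widehat b_Y}=\beta_m\widehat a_X=\beta_m\,\widehat a_X\cdot 1$, and $1\in\B_0\subseteq\B_{k-1}$, so this lies in $\Span(\B_k)$ as well.
\end{itemize}
Summing the contributions gives $\coeff{m}{\widehat g_{XY}}\in\Span(\B_k)$.

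The main obstacle is this product-gate bookkeeping. The key point is that both children lie in strictly lower classes, so their coefficients lie in $\Span(\B_{k-1})$, and left multiplication by $\widehat a_X$ is exactly the operation built into the definition of $\B_k$.
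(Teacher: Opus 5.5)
Your proposal is correct and follows the paper's proof essentially step by step. It uses the same five-term decomposition of $\widehat g_{XY}$ at product gates, the same use of $1\in\B_{k-1}$ and of left multiplication by $\widehat a_X\in\M_k$, and the same linearity argument at sum gates. The only cosmetic difference is that you induct lexicographically on $(k,\text{depth})$, whereas the paper inducts on gate depth alone; depth already covers same-class children of sum gates, because every child has strictly smaller depth.
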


\begin{proof}
We argue by induction on the depth of the gate $g$. If $g$ is a leaf, then
$\widehat g_{XY}=0$, and the claim is immediate. Assume now that the claim holds for all gates of smaller depth. Note that if an edge from $a$ to $g$ carries a weight $\alpha \in \F$, the value passed is $\alpha \widehat{a}$. By the induction hypothesis, if $a\in G_j$ for some $j\le k$, then $\CoeffX\!\bigl(\widehat a_{XY}\bigr)\subseteq \Span(\B_j)$.
Since $\Span(\B_k)$ is a vector space, it is closed under scalar multiplication. Therefore the corresponding coefficient sets
$\CoeffX\!\bigl((\alpha\widehat a)_{XY}\bigr)$ are also subsets of $\Span(\B_j)$. It thus suffices to analyze the unweighted sum and product operations.

\smallskip
\noindent
\textbf{Case 1: $g$ is a product gate.}
Write $g=a\times b$.
Observe that both $a$ and $b$ have syntactic degree strictly smaller than $l_k$. A direct inspection of monomial types gives
\begin{align*}
\widehat g_1
&=
\widehat a_1\,\widehat b_1\\[3pt]
\widehat g_X
&=
\widehat a_X\widehat b_X + \widehat a_X\widehat b_1 + \widehat a_1\widehat b_X\\[3pt]
\widehat g_Y
&=
\widehat a_Y\widehat b_Y + \widehat a_Y\widehat b_1 + \widehat a_1\widehat b_Y\\[3pt]
\widehat g_{XY}
&=
\widehat a_X\widehat b_Y
+ \widehat a_{XY}\widehat b_Y
+ \widehat a_X\widehat b_{XY}
+ \widehat a_{XY}\widehat b_1
+ \widehat a_1\widehat b_{XY}.
\end{align*}

Fix $m\in Y^*$. We need to show that $\coeff{m}{\widehat g_{XY}}\in \Span(\B_k)$.
Since
$$\widehat g_{XY}=\widehat a_X\widehat b_Y+ \widehat a_{XY}\widehat b_Y+ \widehat a_X\widehat b_{XY}+ \widehat a_{XY}\widehat b_1+ \widehat a_1\widehat b_{XY},$$
and $\Span(\B_k)$ is a vector space, it suffices to show that $\coeff{m}{T}\in \Span(\B_k)$ for each $T\in
\left\{
\widehat a_X\widehat b_Y,
\widehat a_{XY}\widehat b_Y,
\widehat a_X\widehat b_{XY},
\widehat a_{XY}\widehat b_1,
\widehat a_1\widehat b_{XY}
\right\}$.

We verify this for all five terms below.

\begin{itemize}
    \item For $\widehat a_X\widehat b_Y$, the coefficient of $m$ is a scalar
    multiple of $\widehat a_X$. Since $g=a\times b\in P_k$, we have
    $\widehat a_X\in \M_k$, and so $\widehat a_X = \widehat a_X\cdot 1 \in \B_k$, because $1\in \B_{k-1}$.

    \item For $\widehat a_{XY}\widehat b_Y$, the coefficient of $m$ is a linear
    combination of terms of the form $\coeff{m'}{\widehat a_{XY}}$, where $m'$
    ranges over prefixes of $m$. By the induction hypothesis, each such term lies in
    $\Span(\B_{k-1})\subseteq\Span(\B_k)$.

    \item For $\widehat a_X\widehat b_{XY}$, the coefficient of $m$ has the form
    $\widehat a_X\cdot \coeff{m}{\widehat b_{XY}}$. By induction,
    $\coeff{m}{\widehat b_{XY}}\in \Span(\B_{k-1})$, and since $\widehat a_X\in \M_k$, it follows from the definition of $\B_k$ that
    this term lies in $\Span(\B_k)$.

    \item For $\widehat a_{XY}\widehat b_1$, the coefficient of $m$ is a scalar
    multiple of $\coeff{m}{\widehat a_{XY}}$ and the claim follows by induction.

    \item For $\widehat a_1\widehat b_{XY}$, the same reasoning as above applies.
\end{itemize}

Therefore $\coeff{m}{\widehat g_{XY}}\in \Span(\B_k)$ for every $m\in Y^*$, so $\CoeffX(\widehat g_{XY})\subseteq \Span(\B_k)$.

\smallskip
\noindent
\textbf{Case 2: $g$ is a sum gate.}
Write $g=a+b$.
If $g\in G_k$, then $a$ and $b$ lie in classes $G_k$ and $G_t$ for some $t\le k$.
By the induction hypothesis, $\CoeffX(\widehat a_{XY})\subseteq \Span(\B_k)$ and
$\CoeffX(\widehat b_{XY})\subseteq \Span(\B_t)\subseteq \Span(\B_k)$.
Since $\widehat g_{XY}=\widehat a_{XY}+\widehat b_{XY}$, the claim follows.
This finishes the proof of the claim.
\end{proof}

The only difference between the construction in \cite{shastri-itcs2026} and the current construction is that Claim
\ref{claim:spanning} spans only the coefficients of the $XY$ part.

\section{Projection and Counting Projected Spanning Sets}\label{sec:projection}

We now turn to the new part of the argument. We project to homogeneous degree $d$ via the linear map $\proj$ and count only those elements in $\B_{d'}$ whose degree $d$ projections survive.

The construction of the sets $\B_k$ starts from the single base element $1$.
Everything else is obtained by repeatedly multiplying on the left by elements from the
sets $\M_k$ with the crucial restriction that we may only multiply by at most one element from each distinct syntactic degree class.

\begin{definition}[Valid parameterization]
A \emph{valid parameterization} is an expression $v=p_1p_2\cdots p_r$,
where
\begin{itemize}
    \item $r\ge 0$
    \item for each $i$, the multiplier $p_i$ belongs to some set $\M_{m_i}$
    \item the class indices strictly decrease: $d' \ge m_1 > m_2 > \cdots > m_r \ge 1$.
\end{itemize}
When $r=0$, the parameterization is the empty product, namely $1$.
\end{definition}

\begin{lemma}\label{lem:parametrization}
Every element of $\B_k$ admits a valid parameterization in which all class indices are at
most $k$.
\end{lemma}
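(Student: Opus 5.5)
We argue by induction on $k$, tracking for each element of $\B_k$ an explicit valid parameterization whose class indices are all at most $k$. It is convenient to strengthen the statement slightly and prove the equivalent form: every $v\in\B_k$ can be written as $v=p_1p_2\cdots p_r$ with $p_i\in\M_{m_i}$ and $k\ge m_1>m_2>\cdots>m_r\ge 1$. This is exactly a valid parameterization with all class indices at most $k$, since $k\le d'$.

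\textbf{Base case} ($k=0$). Here $\B_0=\{1\}$, and $1$ is the empty product with $r=0$. Its set of class indices is empty, so the bound ``at most $0$'' holds vacuously.

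\textbf{Inductive step.} Fix $k\ge 1$ and $v\in\B_k$. By the definition $\B_k=\B_{k-1}\cup\{p\,h\mid p\in\M_k,\ h\in\B_{k-1}\}$, there are two cases.
\begin{enumerate}
    \item If $v\in\B_{k-1}$, the induction hypothesis gives a valid parameterization with all indices at most $k-1\le k$, and we are done.
    \item Otherwise $v=p\,h$ with $p\in\M_k$ and $h\in\B_{k-1}$. By induction, $h=p_1\cdots p_r$ with $k-1\ge m_1>\cdots>m_r\ge 1$. Set $p_0:=p$ and $m_0:=k$. Then $v=p_0p_1\cdots p_r$ with $k=m_0>m_1>\cdots>m_r\ge 1$, since $m_1\le k-1<k$. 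This also covers $r=0$, where $v=p$ has the single index $k$.
\end{enumerate}
In both cases $v$ admits a valid parameterization with indices at most $k\le d'$, which completes the induction.

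\textbf{Main point of care.} The only delicate part is bookkeeping: the strict decrease must be preserved when prepending the new multiplier. The induction hypothesis must therefore give the sharp bound $m_1\le k-1$ for elements of $\B_{k-1}$, not just $m_1\le d'$. That is why the statement is phrased with ``class indices at most $k$'' rather than only validity; with that strengthened hypothesis the step is immediate.

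Note that the lemma only claims existence of a parameterization. It does not require that the parameterization be unique, nor that the product be nonzero. These issues are deferred to the later counting and $\mindeg$ arguments.
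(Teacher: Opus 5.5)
Your proof is correct and follows the paper's argument: induction on $k$, with the base case $\B_0=\{1\}$ handled by the empty product, and an inductive step that either inherits the parameterization from $\B_{k-1}$ or prepends $p\in\M_k$ to a parameterization of $h\in\B_{k-1}$ whose indices are at most $k-1$, which preserves strict decrease. The ``strengthened'' form you state is not actually stronger: it is just the lemma itself, and that index bound is exactly what lets the induction go through.
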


\begin{proof}
We argue by induction on $k$. For $k=0$, the only element of $\B_0$ is $1$, so it has a valid parameterization of
length $0$. Now assume $k\ge 1$, the statement holds for $\B_{k-1}$, and let $v\in \B_k$. If $v\in \B_{k-1}$, we are done by induction. Otherwise, $v=p\,h$ for some $p\in \M_k$ and some $h\in \B_{k-1}$. By induction, $h$ has a valid parameterization $h=p_2\cdots p_r$ with strictly decreasing class indices below $k$. Therefore $v=p\,p_2\cdots p_r$ is a valid parameterization of $v$, and its class indices still strictly decrease.
\end{proof}

\begin{lemma}\label{lem:min-degree-multiplier}
For every gate $g$ of $\Psi$, either $\widehat g_X=0$ or $\mindeg(\widehat g_X)\ge 1$.
In particular, every nonzero $p\in \M_k$ satisfies $\mindeg(p)\ge 1$.
\end{lemma}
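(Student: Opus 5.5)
The lemma is essentially a direct consequence of the definition of the decomposition $\widehat g=\widehat g_1+\widehat g_X+\widehat g_Y+\widehat g_{XY}+\widehat g_{\mathrm{other}}$, so the plan is simply to unwind that definition rather than induct over the circuit. By definition, $\widehat g_X$ is the sum of the \emph{non-constant} monomials of $\widehat g$ that involve only variables from $X$. Hence every monomial occurring in $\widehat g_X$ with nonzero coefficient has degree at least $1$. If $\widehat g_X\neq 0$, then $\mindeg(\widehat g_X)$ is the minimum degree over a nonempty set of monomials, each of degree at least $1$, so $\mindeg(\widehat g_X)\ge 1$.

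The one point needing care is that $\widehat g_X$ is a well-defined polynomial in which no cancellation can reintroduce a constant term. This holds because the five parts of the decomposition are supported on pairwise disjoint sets of monomials. In particular, the constant monomial belongs only to $\widehat g_1$, and $\widehat g_X$ is obtained by restricting $\widehat g$ to the set of nonconstant monomials in $X^*$. It therefore inherits the coefficients of $\widehat g$ on exactly those monomials and is zero elsewhere.

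For the second sentence, I would recall that by definition $\M_k=\{\widehat a_X : g=a\times b\in P_k\}$. So any $p\in\M_k$ equals $\widehat a_X$ for some gate $a$ of $\Psi$, and applying the first part to the gate $a$ gives $\mindeg(p)\ge 1$ whenever $p\neq 0$.

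I do not expect a real obstacle here. The only thing to avoid is the detour of proving the statement inductively through the product-gate identity $\widehat g_X=\widehat a_X\widehat b_X+\widehat a_X\widehat b_1+\widehat a_1\widehat b_X$; that route also works via Lemma~\ref{lem:mindeg-product}, but it is unnecessary, since the bound holds for every polynomial by the way the $X$-part is defined. The lemma's role is to be combined later with Lemma~\ref{lem:mindeg-product}, so that a nonzero valid parameterization of length $r$ has minimum degree at least $r$.
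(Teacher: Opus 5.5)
Your proposal is correct and takes essentially the same approach as the paper. Both arguments observe that, by definition, $\widehat g_X$ consists only of nonconstant $X$-monomials, so any nonzero $\widehat g_X$ has minimum degree at least $1$; the statement about $\M_k$ then follows because its elements are of the form $\widehat a_X$.
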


\begin{proof}
By definition, $\widehat g_X$ is the sum of all non-constant monomials of $\widehat g$
consisting only of $X$-variables. Thus, if $\widehat g_X\neq 0$, then every monomial in
$\widehat g_X$ has degree at least $1$, and hence $\mindeg(\widehat g_X)\ge 1$.
\end{proof}

\begin{lemma}[Degree truncation]\label{lem:degree-truncation}
If $v=p_1p_2\cdots p_r$ is a valid parameterization with $r>d$, then $\proj(v)=0$.
\end{lemma}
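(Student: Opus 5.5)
The plan is a short case split combined with Lemma~\ref{lem:mindeg-product} and Lemma~\ref{lem:min-degree-multiplier}. Let $v=p_1p_2\cdots p_r$ be a valid parameterization with $r>d$.

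\emph{First case: some multiplier vanishes.} If some $p_i=0$, then $v=0$. Since $\proj$ is linear, $\proj(v)=0$ and we are done.

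\emph{Second case: every multiplier is nonzero.} Each $p_i$ lies in some $\M_{m_i}$. By Lemma~\ref{lem:min-degree-multiplier}, each nonzero $p_i$ satisfies $\mindeg(p_i)\ge 1$. We then apply Lemma~\ref{lem:mindeg-product} inductively on the number of factors. The step that needs care is that this lemma requires both factors to be nonzero, so the induction must also track that the partial products stay nonzero. This is immediate because $\F\langle X\rangle$ has no zero divisors: the product of the lowest-degree homogeneous components of two nonzero polynomials is nonzero, since the leading monomials under any monomial order multiply to a nonzero term. This fact is in any case implicit in Lemma~\ref{lem:mindeg-product}. The induction therefore gives $v\neq 0$ and
\[
\mindeg(v)=\sum_{i=1}^r \mindeg(p_i)\ge r>d .
\]

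Every monomial occurring in $v$ thus has degree at least $r>d$. So $v$ has no homogeneous component of degree $d$, and $\proj(v)=0$.

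I do not expect a real obstacle. The only points to state explicitly are that $\proj$ kills any polynomial whose minimum degree exceeds $d$, and that the zero-multiplier case is handled separately so that Lemma~\ref{lem:mindeg-product} is only ever applied to nonzero factors.
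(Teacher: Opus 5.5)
Your proof is correct and matches the paper's argument: dispose of the zero-multiplier case, then combine $\mindeg(p_i)\ge 1$ with repeated use of the additivity of $\mindeg$ to get $\mindeg(v)\ge r>d$, so $\proj(v)=0$. Your explicit check that the partial products stay nonzero is a small point of rigor that the paper leaves implicit. It holds because in $\F\langle X\rangle$ the product of the lowest-degree components of two nonzero polynomials is nonzero.
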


\begin{proof}
If some factor among $p_1,\dots,p_r$ is zero, then $v=0$, and hence $\proj(v)=0$.
So we may assume that all these factors are nonzero. Each multiplier $p_i$ belongs to some $\M_{m_i}$, so by
Lemma~\ref{lem:min-degree-multiplier}, $\mindeg(p_i)\ge 1$ for $1\le i\le r$. Applying Lemma~\ref{lem:mindeg-product} repeatedly, we obtain $\mindeg(v)=\mindeg(p_1)+\cdots+\mindeg(p_r)\ge r$.
If $r>d$, every monomial in $v$ has degree strictly greater than $d$. Hence its
homogeneous degree $d$ component is zero, so $\proj(v)=0$.
\end{proof}

\begin{lemma}\label{lem:elementary-symmetric}
For any nonnegative reals $a_1,\dots,a_t$ and any integer $r\ge 0$, $e_r(a_1,\dots,a_t)\le \frac{1}{r!}\left(\sum_{i=1}^t a_i\right)^r$, where $e_r$ denotes the $r$-th elementary symmetric polynomial.
\end{lemma}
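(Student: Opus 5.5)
The plan is to expand the power $\left(\sum_{i=1}^t a_i\right)^r$ by the multinomial theorem and compare it term by term with $e_r(a_1,\dots,a_t)$. Write $S=\sum_{i=1}^t a_i$. Then
\[
S^r=\sum_{(i_1,\dots,i_r)\in[t]^r} a_{i_1}a_{i_2}\cdots a_{i_r},
\]
where the sum runs over all ordered $r$-tuples of indices. Every summand is a product of nonnegative reals, so every summand is nonnegative.

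First restrict attention to the tuples whose entries are pairwise distinct. Each $r$-element subset $\{j_1<\cdots<j_r\}\subseteq[t]$ arises as the set of entries of exactly $r!$ such tuples, one for each ordering. All of these tuples contribute the same product $a_{j_1}\cdots a_{j_r}$. The distinct-index part of the sum is therefore exactly $r!\,e_r(a_1,\dots,a_t)$.

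The remaining tuples, those with some repeated index, contribute a nonnegative amount. This gives $S^r\ge r!\,e_r(a_1,\dots,a_t)$, and dividing by $r!$ yields the claim.

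Two degenerate cases need a quick check. When $r=0$, both sides equal $1$, using $e_0=1$ and $S^0=1$. When $r>t$, we have $e_r=0$ and the bound is trivial; the counting argument also covers this, since there are then no distinct-index tuples.

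The only point needing care is the use of nonnegativity, which lets us discard the repeated-index terms. No other step is a genuine obstacle.
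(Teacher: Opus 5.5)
Your proof is correct and matches the paper's argument: expand $\left(\sum_{i} a_i\right)^r$ over ordered $r$-tuples, note that each squarefree product $a_{u_1}\cdots a_{u_r}$ with distinct indices appears exactly $r!$ times, and drop the remaining terms, which are nonnegative. Your separate check of the cases $r=0$ and $r>t$ is a harmless addition that the paper leaves implicit.
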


\begin{proof}
Expand $$\left(\sum_{i=1}^t a_i\right)^r=\sum_{(j_1,\dots,j_r)\in [t]^r} a_{j_1}\cdots a_{j_r}$$
Fix $r$ distinct indices $u_1,\dots,u_r$. The squarefree monomial $a_{u_1}\cdots a_{u_r}$ appears exactly $r!$ times in this expansion, once for each permutation of the indices.
All other terms are nonnegative. Therefore $\left(\sum_{i=1}^t a_i\right)^r\ge r!\,e_r(a_1,\dots,a_t)$, which gives the result.
\end{proof}

\begin{lemma}\label{thm:surviving-generators}
The number of elements $v\in \B_{d'}$ with nonzero degree $d$ projection satisfies
$$\bigl|\{v\in \B_{d'} \mid \proj(v)\neq 0\}\bigr|\le\sum_{r=0}^{d}\frac{s^r}{r!}.$$
\end{lemma}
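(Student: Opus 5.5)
The plan is to bound the set of surviving elements by the number of valid parameterizations of length at most $d$, and then count those parameterizations class by class. By Lemma~\ref{lem:parametrization} with $k=d'$, every $v\in\B_{d'}$ can be written as $v=p_1p_2\cdots p_r$, a valid parameterization. Fix one such parameterization for each $v$. This gives an injection from $\B_{d'}$ into the set of valid parameterizations, since distinct elements $v$ have distinct values and so receive distinct expressions. If $\proj(v)\neq 0$, then Lemma~\ref{lem:degree-truncation} forces the chosen parameterization to have length $r\le d$. Hence
\[
\bigl|\{v\in\B_{d'}\mid \proj(v)\neq 0\}\bigr| \le \sum_{r=0}^{d} N_r,
\]
where $N_r$ denotes the number of valid parameterizations of length exactly $r$. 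Here a parameterization is viewed formally as a sequence of class indices $(m_1,\dots,m_r)$ together with a choice of element $p_i\in\M_{m_i}$ for each $i$.

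Next I count $N_r$. A valid parameterization of length $r$ is determined by two choices. The first is an $r$-element subset $\{m_1>\cdots>m_r\}\subseteq[d']$; its order is forced by the requirement that indices strictly decrease. The second is a choice of one element from each $\M_{m_i}$. Each element of $\M_k$ has the form $\widehat a_X$ for the left child $a$ of some product gate in $P_k$, so $|\M_k|\le |P_k| = s_k$. Therefore
\[
N_r \le \sum_{1\le u_1<\cdots<u_r\le d'} s_{u_1}\cdots s_{u_r} = e_r(s_1,\dots,s_{d'}).
\]
Lemma~\ref{lem:elementary-symmetric} then gives $N_r\le \frac{1}{r!}\bigl(\sum_k s_k\bigr)^r$, and \eqref{eq:sumsk} bounds this by $s^r/r!$. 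For $r=0$ both sides equal $1$, which accounts for the empty product. Summing over $r$ from $0$ to $d$ gives the claimed bound.

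The only step needing care is the first one. It passes from elements of $\B_{d'}$ to parameterizations, where a single element may admit several parameterizations, some of length greater than $d$. To handle this, I use the fact that Lemma~\ref{lem:degree-truncation} applies to \emph{any} valid parameterization of length greater than $d$, not just a particular one. So if $\proj(v)\neq0$, every parameterization of $v$, and in particular the fixed one, has length at most $d$. The map is injective because it sends each $v$ to an expression whose value is $v$. The remaining steps are routine counting.
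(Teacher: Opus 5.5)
Your proposal is correct and follows the paper's proof essentially step for step. You reduce to counting valid parameterizations of length at most $d$ via Lemma~\ref{lem:parametrization} and Lemma~\ref{lem:degree-truncation}, then bound the number of length-$r$ parameterizations by $e_r(s_1,\dots,s_{d'})\le s^r/r!$; your explicit injection (fixing one parameterization per element, and noting that degree truncation applies to every parameterization of $v$) spells out a step the paper leaves implicit.
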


\begin{proof}
By Lemma~\ref{lem:parametrization}, every element of $\B_{d'}$ has a valid
parameterization. By Lemma~\ref{lem:degree-truncation}, any valid parameterization of
length greater than $d$ has zero degree $d$ projection. Thus every element counted on
the left admits a valid parameterization of length at most $d$. Therefore, it suffices to count valid parameterizations of lengths $0,1,\dots,d$.

Now fix a length $r$. To choose the multipliers $p_1,\dots,p_r$, we first choose
$r$ distinct syntactic degree classes $1\le i_1<\cdots<i_r\le d'$ and then choose one multiplier from each corresponding set $\M_{i_1},\dots,\M_{i_r}$.
Since $|\M_k|\le s_k$ for every $k$, the number of such choices is at most $e_r(s_1,\dots,s_{d'})$.
By Lemma~\ref{lem:elementary-symmetric} and \eqref{eq:sumsk},
$$e_r(s_1,\dots,s_{d'})\le\frac{1}{r!}\left(\sum_{k=1}^{d'} s_k\right)^r\le\frac{s^r}{r!}.$$

Summing over $0\le r\le d$, we get
$$\bigl|\{v\in \B_{d'} \mid \proj(v)\neq 0\}\bigr|\le\sum_{r=0}^{d}\frac{s^r}{r!}.$$
\end{proof}

\section{Main lower bound}

\begin{theorem}\label{thm:main}
Let $n\ge 2$ and $d\ge 1$, and let $\Psi$ be a fan-in $2$ noncommutative arithmetic circuit computing $\Palnd$.
If $\Psi$ has size $s$, then $s\ge(n-1)d/e$.
\end{theorem}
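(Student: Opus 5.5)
The plan is to combine the spanning result of Claim~\ref{claim:spanning} with the counting bound of Lemma~\ref{thm:surviving-generators}, and then to estimate the resulting inequality $n^d\le\sum_{r=0}^{d}s^r/r!$.

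\textbf{Spanning step.} Let $g$ be the output gate, with $g\in G_k$. Since $\Palnd$ is $X,Y$-separated and every monomial contains both $X$- and $Y$-variables, we have $\widehat g=\widehat g_{XY}$. Claim~\ref{claim:spanning} then gives $\CoeffX(\Palnd)\subseteq\Span(\B_k)\subseteq\Span(\B_{d'})$.

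\textbf{Projection step.} Every element of $\CoeffX(\Palnd)$ is a degree $d$ monomial in $X$, so $\proj$ fixes it. Applying the linear map $\proj$ to the inclusion above gives
\[
\CoeffX(\Palnd)\subseteq\Span\{\proj(v)\mid v\in\B_{d'},\ \proj(v)\neq 0\}.
\]
Comparing dimensions and using Lemma~\ref{thm:surviving-generators}, we obtain $n^d\le\sum_{r=0}^{d}s^r/r!$.

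\textbf{Estimate.} Suppose for contradiction that $s<(n-1)d/e$. The terms $s^r/r!$ for $r\le d$ are dominated by a geometric series. Using $r!\ge (r/e)^r$ would give a bound roughly of the form $(es/d)^d$ times a factor. The cleanest route is to compare each term with $s^d/d!$: for $r\le d$,
\[
\frac{s^r}{r!}=\frac{s^d}{d!}\cdot\frac{d!}{r!\,s^{d-r}}\le\frac{s^d}{d!}\left(\frac{d}{s}\right)^{d-r}.
\]
When $s\ge d$ this bounds the sum by $\frac{s^d}{d!}\cdot\frac{1}{1-d/s}$, but the ratio $d/s$ may be close to $1$. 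Instead, I would use $\frac{d}{s}\le\frac{e}{n-1}$, which holds whenever $s\ge(n-1)d/e$, in the contrapositive form.

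Alternatively, and more simply, I would bound
\[
\sum_{r=0}^{d}\frac{s^r}{r!}=\sum_{r=0}^{d}\frac{d^r}{r!}\left(\frac{s}{d}\right)^r\le\left(\max\left(1,\frac{s}{d}\right)\right)^d e^d,
\]
using $\sum_r d^r/r!\le e^d$. If $s\le d$, this gives $n^d\le e^d$, which contradicts $n\ge 3$. For $n=2$ the required bound $s\ge d/e$ holds anyway, since the circuit must contain at least $d$ leaves for the degree to be $d$; this is a small separate argument. If $s>d$, we get $n^d\le (es/d)^d$, so $s\ge nd/e\ge(n-1)d/e$.

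\textbf{Main obstacle.} The delicate point is the small case $s\le d$ and $n\le e$, that is $n=2$. There I would use the trivial fact that syntactic degree grows by at most a factor of $2$ per product gate, or simply that $d$ distinct leaves $x_{i}$ are needed. This also yields $s\ge d\ge(n-1)d/e$ whenever $n-1\le e$.
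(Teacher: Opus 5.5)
Your spanning and projection steps match the paper's, and your estimate $\sum_{r=0}^{d} s^r/r!\le \max(1,s/d)^d\,e^d$ is valid. For $n\ge 3$ it does rule out $s\le d$ and then gives $s\ge nd/e$, which is slightly stronger than the paper's bound.

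\textbf{The gap is the case $n=2$, $s\le d$.} There your estimate only gives $2^d\le e^d$, which is not a contradiction. Your proposed patch for this case does not work:
\begin{itemize}
\item Leaf counting fails. A circuit is a DAG, so leaves and gates can be reused, and the number of leaves is not bounded below by the degree. With $n=2$ there are only two $X$-variables, so ``$d$ distinct leaves $x_i$'' is impossible. Also, $x_1^{2^k}$ is computed from a single leaf with $k$ product gates.
\item The observation that syntactic degree at most doubles per product gate only yields $s=\Omega(\log d)$, not $s\ge d/e$.
\end{itemize}
So, as written, the theorem is not established for $n=2$.

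The loss comes from discarding the factor $(s/d)^r$ when $s\le d$. The paper avoids any case split by using $1/r!\le\binom{d}{r}(e/d)^r$, which follows from $\binom{d}{r}\ge(d/r)^r$ and $r!\ge(r/e)^r$. This gives $\sum_{r\le d}s^r/r!\le(1+es/d)^d$, and hence $n\le 1+es/d$ uniformly for every $n\ge 2$.

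If you want to keep your own estimate, you can close the remaining case with the crude bound $\sum_{r\le d}s^r/r!\le e^s$. It gives $n^d\le e^s$, i.e.\ $s\ge d\ln n$. For $n=2$ this is $s\ge d\ln 2>d/e$, as required.
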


\begin{proof}
By Claim~\ref{claim:spanning}, $\CoeffX(\Palnd)\subseteq \Span(\B_{d'})$.
Every element of $\CoeffX(\Palnd)$ is homogeneous of degree exactly $d$, so $\proj$
acts as the identity on this set. Since $\proj$ is a linear map, we have the inclusion
$$\Span\bigl(\CoeffX(\Palnd)\bigr)\subseteq\Span\bigl(\proj(\B_{d'})\bigr).$$
Hence
$$n^d=\dim_{\F}\Span\bigl(\CoeffX(\Palnd)\bigr)\le\dim_{\F}\Span\bigl(\proj(\B_{d'})\bigr).$$
The space on the right is spanned by the nonzero vectors $\proj(v)$ with
$v\in \B_{d'}$, so
$$\dim_{\F}\Span\bigl(\proj(\B_{d'})\bigr)\le\bigl|\{v\in \B_{d'} \mid \proj(v)\neq 0\}\bigr|.$$
Applying Lemma~\ref{thm:surviving-generators}, we obtain
\begin{equation}\label{eq:master}
n^d
\le
\sum_{r=0}^{d}\frac{s^r}{r!}.
\end{equation}

For $0\le r\le d$, we use the bound $1/r!\le \binom{d}{r}(e/d)^r$.
For $r=0$, both sides are equal to $1$. For $1\le r\le d$, we use
$\binom{d}{r}\ge (d/r)^r$ and $r!\ge (r/e)^r$. Multiplying these two inequalities
by $(e/d)^r$ we get
$$\binom{d}{r}\left(\frac{e}{d}\right)^r r!\ge\left(\frac{d}{r}\right)^r\left(\frac{e}{d}\right)^r\left(\frac{r}{e}\right)^r=1.$$
Dividing by $r!$ gives the desired bound.
Therefore
$$\sum_{r=0}^{d}\frac{s^r}{r!}\le\sum_{r=0}^{d}\binom{d}{r}\left(\frac{es}{d}\right)^r=\left(1+\frac{es}{d}\right)^d.$$
Substituting this into \eqref{eq:master}, we get $n^d\le\left(1+es/d\right)^d$.
Taking $d$-th roots gives $n\le 1+es/d$. Thus $s\ge (n-1)d/e$.
\end{proof}

\section{On the use of AI systems}\label{sec:AI}

    AI systems were heavily used across various iterations in attempts to generate the novel arguments in this paper (starting from Section 4 onward). After various failed attempts (using AI and otherwise), Gemini 3.1 Pro generated the entire proof, in one shot, of a weaker, $\omega(n\log n)$ lower bound. In particular, Gemini 3.1 Pro produced the original idea of the parametrization used in Lemma \ref{lem:parametrization}, together with the degree truncation in Lemma \ref{lem:degree-truncation}. These constitute in our opinion the core of the proof, on top of the already existing ideas from \cite{shastri-itcs2026}. In order to do so, it was prompted to improve on the spanning set construction from \cite{shastri-itcs2026}, and was provided access to a PDF version of \cite{shastri-itcs2026}. In the present version, this parametrization is improved slightly. Claim \ref{claim:spanning} is now aimed only at $\CoeffX(\widehat g_{XY})$. This removes the trailing factor from the previous parametrization and makes the argument work for low-degree polynomials, yielding the $\Omega(nd)$ lower bound. All proofs were rigorously verified by the author and modified for improved presentation. 

\section{Acknowledgements}

I would like to thank C. Ramya for various fruitful and interesting discussions regarding this problem.

\bibliographystyle{plain}
\bibliography{palindrome_refs}

@InProceedings{shastri-itcs2026,
  author    = {Shastri, Pratik},
  title     = {{Lower Bounds for Noncommutative Circuits with Low Syntactic Degree}},
  booktitle = {17th Innovations in Theoretical Computer Science Conference (ITCS 2026)},
  pages     = {115:1--115:9},
  series    = {Leibniz International Proceedings in Informatics (LIPIcs)},
  ISBN      = {978-3-95977-410-9},
  ISSN      = {1868-8969},
  year      = {2026},
  volume    = {362},
  editor    = {Saraf, Shubhangi},
  publisher = {Schloss Dagstuhl -- Leibniz-Zentrum f{\"u}r Informatik},
  address   = {Dagstuhl, Germany},
  url       = {https://drops.dagstuhl.de/entities/document/10.4230/LIPIcs.ITCS.2026.115},
  urn       = {urn:nbn:de:0030-drops-254028},
  doi       = {10.4230/LIPIcs.ITCS.2026.115}
}

@InProceedings{chatterjee-hrubes-ccc2023,
  author    = {Chatterjee, Prerona and Hrubeš, Pavel},
  title     = {{New Lower Bounds Against Homogeneous Non-Commutative Circuits}},
  booktitle = {38th Computational Complexity Conference (CCC 2023)},
  pages     = {13:1--13:10},
  series    = {Leibniz International Proceedings in Informatics (LIPIcs)},
  ISBN      = {978-3-95977-282-2},
  ISSN      = {1868-8969},
  year      = {2023},
  volume    = {264},
  editor    = {Ta-Shma, Amnon},
  publisher = {Schloss Dagstuhl -- Leibniz-Zentrum f{\"u}r Informatik},
  address   = {Dagstuhl, Germany},
  url       = {https://drops.dagstuhl.de/entities/document/10.4230/LIPIcs.CCC.2023.13},
  urn       = {urn:nbn:de:0030-drops-182835},
  doi       = {10.4230/LIPIcs.CCC.2023.13}
}

@InProceedings{nisan-stoc1991,
  author    = {Nisan, Noam},
  title     = {{Lower Bounds for Non-Commutative Computation}},
  booktitle = {Proceedings of the Twenty-Third Annual ACM Symposium on Theory of Computing},
  pages     = {410--418},
  year      = {1991},
  editor    = {Goldwasser, Shafi},
  publisher = {ACM},
  doi       = {10.1145/103418.103463}
}

@Article{hrubes-wigderson-yehudayoff-jams2011,
  author  = {Hrube{\v{s}}, Pavel and Wigderson, Avi and Yehudayoff, Amir},
  title   = {{Non-commutative circuits and the sum-of-squares problem}},
  journal = {Journal of the American Mathematical Society},
  year    = {2011},
  volume  = {24},
  number  = {3},
  pages   = {871--898},
  doi     = {10.1090/S0894-0347-2011-00694-2}
}

@InProceedings{carmosino-impagliazzo-lovett-mihajlin-ccc2018,
  author    = {Carmosino, Marco L. and Impagliazzo, Russell and Lovett, Shachar and Mihajlin, Ivan},
  title     = {{Hardness Amplification for Non-Commutative Arithmetic Circuits}},
  booktitle = {33rd Computational Complexity Conference (CCC 2018)},
  pages     = {12:1--12:16},
  series    = {Leibniz International Proceedings in Informatics (LIPIcs)},
  year      = {2018},
  volume    = {102},
  editor    = {Eiter, Thomas and Sands, David},
  publisher = {Schloss Dagstuhl -- Leibniz-Zentrum f{\"u}r Informatik},
  address   = {Dagstuhl, Germany},
  doi       = {10.4230/LIPIcs.CCC.2018.12}
}

@misc{raz2026polynomial,
  author        = {Raz, Ran},
  title         = {{Polynomial Lower Bounds for Arithmetic Circuits over Non-Commutative Rings}},
  year          = {2026},
  eprint        = {2604.22006},
  archivePrefix = {arXiv},
  primaryClass  = {cs.CC},
  url           = {https://arxiv.org/abs/2604.22006}
}

@Article{Stra73,
author={Strassen, Volker},
title={Die Berechnungskomplexit{\"a}t von elementarsymmetrischen Funktionen und von Interpolationskoeffizienten},
journal={Numerische Mathematik},
year={1973},
month={Jun},
day={01},
volume={20},
number={3},
pages={238-251},
issn={0945-3245},
doi={10.1007/BF01436566},
url={https://doi.org/10.1007/BF01436566}
}

@article{BS83,
title = {The complexity of partial derivatives},
journal = {Theoretical Computer Science},
volume = {22},
number = {3},
pages = {317-330},
year = {1983},
issn = {0304-3975},
doi = {https://doi.org/10.1016/0304-3975(83)90110-X},
url = {https://www.sciencedirect.com/science/article/pii/030439758390110X},
author = {Walter Baur and Volker Strassen}
}

\end{document}